\documentclass[journal]{IEEEtran}

\usepackage[cmex10]{amsmath}
\usepackage{amssymb}
\usepackage{enumitem}

\usepackage{cases}
\usepackage{multirow}
\usepackage{empheq}

\usepackage{siunitx}
\usepackage[noadjust]{cite}
\usepackage{verbatim}

\usepackage{algorithm}
\usepackage{algpseudocode}

\ifCLASSINFOpdf
\usepackage[pdftex]{graphicx}
\else
\usepackage[dvips]{graphicx}
\fi

\usepackage[caption=false,font=footnotesize]{subfig}

\usepackage{color}
\usepackage{multirow}

\usepackage[hyphens]{url}
\usepackage{hyperref}
\usepackage{booktabs}
\newtheorem{theorem}{Theorem}
\newtheorem{lemma}{Lemma}

\newtheorem{proposition}{Proposition}

\newtheorem{example}{Example}

\title{Degree-Constrained Interval Optimization for Minimax Polynomial Approximation in Homomorphic Encryption}

\author{Jiheon Woo, Donggyun Ryu, and Yongjune Kim%
\thanks{J. Woo, D. Ryu, and Y. Kim are with the Department of Electrical Engineering, Pohang University of Science and Technology (POSTECH), Pohang 37673, South Korea (e-mail: \{jhwoo1997, dgryu, yongjune\}@postech.ac.kr).}%
}

\begin{document}
\sloppy

\maketitle

\begin{abstract}

Homomorphic encryption (HE) enables privacy-preserving inference under arithmetic constraints that restrict encrypted evaluation to additions and multiplications. 
As a result, non-polynomial activation functions must be replaced by polynomial approximations. 
Among polynomial approximation methods, minimax approximation, typically computed by the Remez algorithm, is a standard approach because it minimizes the maximum approximation error over a given design interval. 
For minimax polynomial design, the approximation interval is a critical hyperparameter: a wider interval improves robustness to large-magnitude inputs while increasing the minimax approximation error under a fixed degree budget. 
In this paper, we formulate this trade-off as a distribution-aware interval optimization problem, where the approximation interval is chosen to minimize the mean-squared error (MSE) with respect to the pre-activation distribution of interest.
To effectively control outside-interval inputs, we combine minimax polynomials with domain extension functions (DEFs) and their HE-realizable polynomial counterparts, domain extension polynomials (DEPs), which approximate a clipping operation outside the design interval and thereby suppress uncontrolled polynomial extrapolation.  
We first derive an analytically tractable DEF-based proxy objective that captures the trade-off between within-interval minimax approximation error and outside-interval clipping error. 
We then connect this idealized objective to HE-realizable DEP constructions through an implementation-error decomposition with an accompanying upper bound. 
Numerical experiments on representative non-polynomial activation functions show that the proposed interval optimization achieves significantly lower MSE than conventional minimax baselines, with particularly large gains for sigmoid, tanh, and GELU, while the minimizer of the analytical proxy closely matches that of the numerical ideal objective.
\end{abstract}

\begin{IEEEkeywords}
Homomorphic encryption, Privacy-preserving machine learning, Polynomial approximation, Interval optimization
\end{IEEEkeywords}

\section{Introduction}\label{sec:introduction}

The rapid advancement of artificial intelligence (AI) has accelerated the adoption of machine learning as a service (MLaaS), in which clients delegate inference tasks to remote servers.
However, this paradigm raises significant privacy concerns with respect to the handling of sensitive user data. 
To address these concerns, privacy-preserving machine learning (PPML) has emerged as a key research area, aiming to perform inference while keeping user data confidential~\cite{Al-Rubaie2019Privacy,Riazi2019deep,Juvekar2018Gazelle,Reagen2021Cheetah}.

Among various approaches to PPML, homomorphic encryption (HE) is regarded as one of the most promising solutions~\cite{Gentry2009fully,Cheon2017ckks}. 
HE allows a server to perform computations directly on encrypted data without decryption. 
This property naturally supports a non-interactive inference setting: once the user provides the encrypted input, the server can execute the entire inference pipeline and return the encrypted result~\cite{Gilad2016cryptonets,Dathathri2019chet}. 
Compared with interactive protocols such as multi-party computation (MPC)~\cite{Evans2018pragmatic}, this setting avoids repeated client-server communication and active client participation during inference.

A major practical challenge in HE-based PPML is that most HE schemes support only a limited set of operations: addition, multiplication, and rotation. 
Consequently, non-polynomial operations in neural networks, such as activation functions, cannot be directly evaluated under HE.
A common approach is therefore to replace such functions with polynomial approximations, enabling the inference pipeline to be represented as an arithmetic circuit.
A broad class of HE-oriented polynomial approximation methods has been developed to balance numerical accuracy and computational efficiency~\cite{Lee2021minimax,Lee2022privacy,Lee2023precise}.
Since the polynomial degree directly affects both approximation accuracy and homomorphic evaluation cost, designing low-degree yet accurate polynomial approximations remains a central challenge.

Minimax approximation is a standard tool for replacing non-polynomial functions with degree-constrained polynomials by controlling the worst-case error over a given design interval \cite{trefethen2019atap,Pachon2009Remez}. 
In principle, the minimax approximation error can be reduced by increasing the polynomial degree.
Under HE constraints, however, a higher degree generally leads to a larger homomorphic evaluation cost.
In particular, in the CKKS scheme \cite{Cheon2017ckks}, ciphertext--ciphertext multiplications are the primary source of noise growth, so feasibility is governed by the multiplication count and, more critically, the multiplicative depth. 
If the required depth exceeds the available noise budget, the evaluation may require costly bootstrapping \cite{Cheon2018bootstrapping}.
Accordingly, reducing the minimax error solely by raising the polynomial degree is often impractical in HE-based inference. 


Another fundamental limitation is that minimax approximation provides worst-case error control only within the design interval; outside this interval, the polynomial can deviate from the target function by an arbitrarily large margin.
Fig.~\ref{fig:remez_example} illustrates the sensitivity of polynomial approximations of activation functions.
A degree-\SI{14} minimax polynomial approximates ReLU accurately on the design interval $[-1,1]$.
However, when evaluated at an outside-interval input, the polynomial can produce a dramatically amplified error: at $x=1.5$, the approximation incurs an absolute error of $\approx 4.09\times 10^3$, whereas $\mathrm{ReLU}(1.5)=1.5$.
In other words, instead of producing an activation close to $1.5$, the polynomial evaluation may produce an error thousands of times larger than the target value.
In HE, intermediate plaintext values are not observable, so such amplified activation errors can propagate through subsequent layers and significantly degrade inference accuracy.


A common response is to enlarge the design interval, but this introduces a competing trade-off.
For a given polynomial degree, enlarging the interval improves coverage of large-magnitude inputs and reduces the risk of extrapolation beyond the design interval, at the cost of larger minimax error within the interval.
Indeed, in the large-interval setting studied in \cite{Cheon2022dep}, maintaining a fixed small worst-case error by minimax approximation can require polynomial degree $\Omega(t)$ on $[-t,t]$.
Thus, under a given degree budget, an overly conservative interval can substantially degrade approximation fidelity in the region of interest. 


Beyond simple interval enlargement, the behavior of the approximation outside the design interval can be controlled through the framework of domain-extension functions (DEFs) and domain-extension polynomials (DEPs) \cite{Cheon2022dep}.
Intuitively, a DEF provides an effective interval-control mechanism that preserves the desired function behavior on the design interval $[-t,t]$, while saturating the effective response beyond this interval so that large-magnitude inputs do not lead to unstable polynomial extrapolation.
A DEP is a polynomial realization of this mechanism, designed to be compatible with homomorphic polynomial evaluation. 


This extension-based approach, however, does not remove the need to optimize the approximation interval. 
Rather, it changes the failure mode from uncontrolled polynomial extrapolation to a controlled clipping error outside the design interval. 
Thus, the interval parameter $t$ still determines a fundamental trade-off: a smaller $t$ improves the minimax approximation within the interval $[-t,t]$ under a given degree budget, but increases the probability mass assigned to the clipped region; a larger $t$ reduces the clipping error at the expense of a larger minimax error within the interval.  
Therefore, even when DEFs or DEPs are used, the design interval should be selected by optimizing the total MSE.


Accordingly, we formulate an interval optimization problem in which the approximation interval $[-t,t]$ is chosen to minimize the mean-squared error (MSE) over the pre-activation distribution of interest. 
Our framework decomposes the MSE into a within-interval approximation term and an outside-interval term.
For the within-interval term, we relate the MSE to the minimax error through the equioscillation characterization, which yields an analytically tractable surrogate objective.
For the outside-interval term, DEFs and DEPs provide a controlled saturation mechanism that replaces unstable extrapolation with a bounded clipping error.
This leads to a two-level design framework: an ideal DEF-based proxy that provides an analytically tractable model of the trade-off between within-interval minimax error and clipping error, and a DEP-based realization that quantifies the additional gap induced by replacing the ideal hard-clipping DEF with an HE-realizable DEP composition.

The main contributions of this paper are summarized as follows:
\begin{itemize}
    \item To the best of our knowledge, we provide the first principled formulation of interval optimization for degree-constrained minimax polynomial approximation by minimizing the MSE over the pre-activation distribution. 
    \item We derive a DEF-based objective that captures the trade-off between within-interval minimax error and outside-interval clipping error in an analytically tractable form.
    \item We establish a connection between the ideal DEF analysis and HE-realizable DEP constructions through an implementation-error decomposition, and derive an upper bound on the DEP-induced MSE gap.
    \item We numerically compare the analytical DEF-based proxy, the numerical ideal DEF objective, and practical DEP realizations for representative activation functions.
\end{itemize}

The rest of this paper is organized as follows.
Section~\ref{sec:preliminaries} introduces preliminaries. 
Section~\ref{sec:intervalopt} presents the proposed interval optimization objective and practical proxy forms based on the ideal DEF.
Section~\ref{sec:dep_impl} introduces the HE-realizable DEP class for interval control and provides its error decomposition.
Section~\ref{sec:experiments} presents numerical experiments. 
Section~\ref{sec:conclusion} concludes. 

\begin{figure}[!t]
    \centering
    \includegraphics[width=\columnwidth]
        {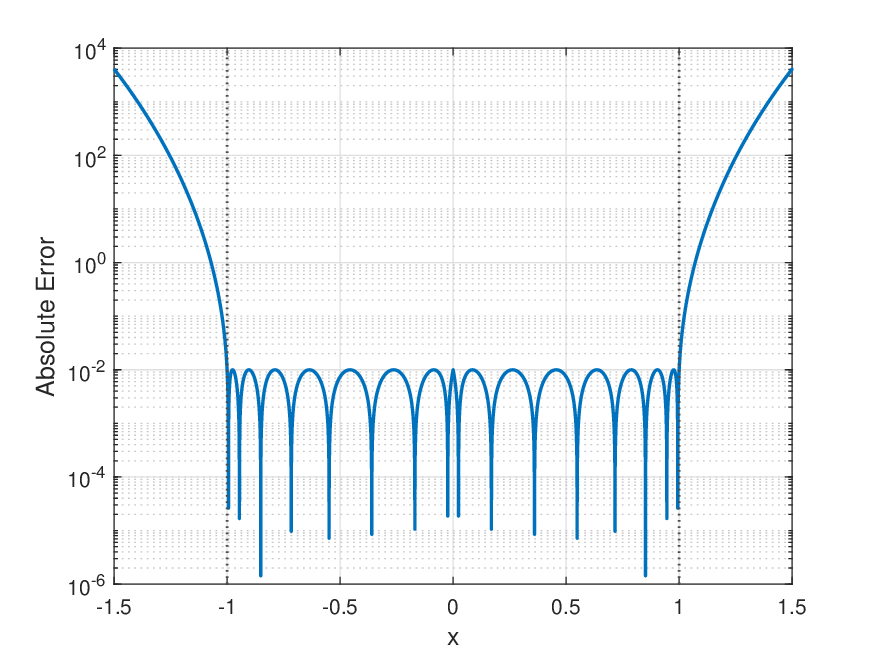}
    \caption{The absolute error of minimax polynomial approximation
    for $\mathrm{ReLU}(x)$ with degree $n=14$ on the approximation
    interval $[-1,1]$.}
    \label{fig:remez_example}
    \vspace{-4mm}
\end{figure}

\section{Preliminaries} \label{sec:preliminaries}

\subsection{Homomorphic Encryption}

HE is a cryptographic primitive that enables computations on encrypted data, yielding an encrypted output that decrypts to the corresponding plaintext result.
In PPML, approximate-arithmetic HE schemes for real-valued data, such as the CKKS scheme \cite{Cheon2017ckks}, are widely adopted.
The supported homomorphic operations include element-wise arithmetic operations such as addition and multiplication; in packed schemes, rotations are also supported.
Regarding multiplication, it is common to distinguish ciphertext--plaintext multiplication (PMult), which involves a plaintext constant, from ciphertext--ciphertext multiplication (CMult), which is typically the computational bottleneck and a primary source of noise growth. 
Accordingly, computational cost and feasibility are governed largely by the multiplicative depth and the noise growth induced by repeated ciphertext multiplications. When the multiplicative depth approaches the available noise budget, costly bootstrapping operations are required to refresh the ciphertext \cite{Cheon2018bootstrapping}.

Such operational constraints have immediate consequences for model design under HE.
Common ML components such as ReLU, sigmoid, tanh, normalization, and comparisons involve non-polynomial or non-arithmetic operations and therefore cannot be directly evaluated using native HE operations.
A standard approach is to replace each non-polynomial primitive with a polynomial approximation, so that the entire inference pipeline becomes an arithmetic circuit. 
In this setting, approximation design plays a critical role in determining the behavior of encrypted inference.
Because intermediate plaintext values are not observable during encrypted execution, the approximation interval must be chosen to control out-of-interval behavior while keeping the polynomial degree and multiplicative depth feasible under HE.

\subsection{Minimax Approximation}\label{subsec:minimax}

A widely used baseline for polynomial approximation under HE is the \emph{minimax} approximation~\cite{trefethen2019atap}, which directly controls the worst-case error over a prescribed design interval. 
Given a continuous function $f$ on $[a,b]$ and the space $\Pi_n$ of polynomials of degree at most $n$, the minimax problem is
\begin{equation}
p_n^* :=\arg\min_{p \in \Pi_n} \max_{x \in [a,b]} \bigl| f(x)-p(x) \bigr| . 
\end{equation}
The corresponding \emph{uniform} (worst-case) approximation error is defined as
\begin{equation}
E_n \;:=\; \max_{x \in [a,b]} \bigl| f(x) - p_n^*(x) \bigr|. 
\end{equation}
The Remez algorithm is an iterative method for computing $p_n^*$ by exploiting the classical equioscillation characterization of minimax solutions~\cite{trefethen2019atap, Pachon2009Remez}. 
Except in the zero-error case, the optimal error function $f(x)-p_n^*(x)$ alternates in sign and attains equal magnitude at $n+2$ extremal points.
In HE-oriented design, minimax polynomials have been widely used to approximate non-polynomial primitives such as sign, ReLU, and max-pooling functions because they provide uniform error control over a prescribed interval~\cite{Lee2021minimax,Lee2022privacy,Lee2023precise}.
This differs from least-squares criteria, which minimize an integrated or average error and therefore do not directly constrain the maximum deviation~\cite{trefethen2019atap}.

\subsection{Domain Extension Functions and Polynomials}\label{subsec:def}

A key practical challenge in HE-oriented approximation is that polynomial approximations provide controlled accuracy only on a prescribed design interval. 
Outside this interval, the polynomial values can deviate from the target function values by an arbitrarily large margin, since the polynomial is no longer constrained by the minimax criterion outside its design range; this is particularly problematic in encrypted execution, where intermediate plaintext values cannot be monitored directly.

To address this challenge, the domain extension methodology of~\cite{Cheon2022dep} enlarges the input interval by introducing a DEF that maps a wider input interval to a narrower one, while preserving the behavior of the original approximation in a smaller region.
The DEF is defined as
\begin{equation}
D(x) \;=\; \frac{1}{2}\Bigl(|x+1| - |x-1|\Bigr).
\end{equation}
While $D(x)$ is not a polynomial and thus cannot be evaluated under HE, the idea of domain extension can be approximated by polynomials, leading to the introduction of DEPs.

A polynomial $d(\cdot)$ is a DEP in the class $\mathcal{D}(\delta, r, R_1, R_2)$ if it satisfies~\cite{Cheon2022dep}:
\begin{align}
|x - d(x)| &\le \delta |x|^3, && \forall x \in [-r, r], \label{eq:dep1} \\
0 \le d'(x) &\le 1, && \forall x \in [-r, r], \label{eq:dep2} \\
d(r) < d(x) &< R_1, && \forall x \in (r, R_2],\label{eq:dep3} \\
-R_1 < d(x) &< d(-r), && \forall x \in [-R_2, -r).\label{eq:dep4}
\end{align}
Here, $\delta$ controls how closely the DEP follows the identity map on $[-r,r]$.
The parameter $r$ specifies the radius of this preservation region. 
The parameters $R_1$ and $R_2$ specify the controlled output interval and the extended input interval, respectively. 
In particular, the conditions in \eqref{eq:dep3} and \eqref{eq:dep4} ensure that inputs in the extended regions $(r,R_2]$ and $[-R_2,-r)$ are mapped into the bounded interval $(-R_1,R_1)$, thereby avoiding unstable extrapolation for large-magnitude inputs.
The extension factor $L$ is defined as
\begin{equation}
L := \frac{R_2}{R_1}.
\end{equation}
This extension factor also provides a scaling rule for composing multiple DEP stages. 
Let $B(\cdot)$ denote a fixed base DEP satisfying  $B(\cdot)\in\mathcal{D}(\delta,r,R,LR)$ for some $L>1$. 
For each stage $i$, define the scaled DEP
\begin{equation}
B_i(x):=L^i B\left(\frac{x}{L^i}\right),
\qquad i=0,1,\ldots,m-1 .
\end{equation}
The $m$-stage DEP composition is then given by
\begin{equation}
g^{(m)}(x)
:=
B_0\circ B_1\circ\cdots\circ B_{m-1}(x).
\end{equation}
Through this scaling-and-composition structure, a base DEP that extends the admissible interval by a factor of $L$ can be repeatedly applied to cover an exponentially larger input interval as $m$ increases, while maintaining near-identity behavior in the preservation region $[-r,r]$.
Thus, DEPs provide an HE-compatible mechanism for controlling out-of-interval behavior.
However, the DEP framework itself does not determine the target design interval; the design interval should be selected.

\section{Interval Optimization via DEF}\label{sec:intervalopt}

In this section, we develop an interval optimization framework in an idealized setting based on the DEF.
Let $f:\mathbb{R}\to\mathbb{R}$ denote the target non-polynomial function, and let $X$ denote the pre-activation random variable.
For each design radius $t>0$, let $p_{n,t}\in\Pi_n$ denote the \emph{degree-$n$ minimax polynomial on $[-t,t]$}, defined as
\begin{equation}\label{eq:pnt-def}
p_{n,t} := \arg\min_{p\in\Pi_n}\max_{x\in[-t,t]}\bigl|f(x)-p(x)\bigr|.
\end{equation}
The design interval is then determined by optimizing $t$ to minimize the induced MSE objective.

Throughout the paper, we distinguish three progressively more practical objectives.
Since the polynomial degree $n$ and scaling factor $L$ are fixed throughout the interval-selection problem, we suppress the dependence on $n$ and $L$ in the objective notation unless otherwise needed.
First, $J_{\mathrm{DEF}}(t)$ denotes the ideal clipping objective based on the non-polynomial hard-clipping map.
Second, $\widehat{J}_{\mathrm{DEF}}(t)$ denotes an analytically tractable surrogate of $J_{\mathrm{DEF}}(t)$, introduced to characterize the interval-selection mechanism.
Third, $J_{\mathrm{DEP}}(t)$ denotes the practical objective induced by DEP compositions that are realizable under HE.
The theoretical analysis focuses on $\widehat{J}_{\mathrm{DEF}}(t)$, while the numerical validation compares it with $J_{\mathrm{DEF}}(t)$ and the HE-realizable objective $J_{\mathrm{DEP}}(t)$.

Let $D(\cdot)$ denote the normalized hard-clipping DEF defined in Section~\ref{subsec:def}. 
For each design radius $t>0$, we use its scaled version

\begin{equation}
D_t(x)
:=
tD\left(\frac{x}{t}\right)
= \begin{cases} -t, & x < -t, \\ x, & -t \le x \le t, \\ t, & x > t. \end{cases}
\end{equation}
Using this map, we define the clipped polynomial evaluator as
\begin{equation}\label{eq:q_def_core}
q_{n,t,\mathrm{DEF}}(x)\;:=\;p_{n,t}\!\bigl(D_t(x)\bigr).
\end{equation}
We then define the DEF-based MSE objective as
\begin{equation}\label{eq:J_def_def}
J_{\mathrm{DEF}}(t)\;:=\;\mathbb{E}\!\left[\bigl(f(X)-q_{n,t,\mathrm{DEF}}(X)\bigr)^2\right].
\end{equation}
The interval selection problem is to minimize $J_{\mathrm{DEF}}(t)$ over $t>0$.
Using the piecewise form of $D_t$, the objective decomposes into within-interval and outside-interval terms:
\begin{align}
J_{\mathrm{DEF}}(t)
&=
\int_{-t}^{t}\bigl(f(x)-p_{n,t}(x)\bigr)^2h_X(x)\,dx \nonumber\\
&+\int_{t}^{\infty}\bigl(f(x)-p_{n,t}(t)\bigr)^2h_X(x)\,dx \nonumber\\
&+\int_{-\infty}^{-t}\bigl(f(x)-p_{n,t}(-t)\bigr)^2h_X(x)\,dx,
\end{align}
where $h_X(x)$ denotes the probability density function (PDF) of $X$.
We denote the first integral by $J_{\mathrm{DEF},\mathrm{in}}(t)$ and the sum of the last two integrals by $J_{\mathrm{DEF},\mathrm{out}}(t)$. 
To obtain a tractable outside-interval surrogate, we substitute the target endpoint values $f(t)$ and $f(-t)$ for the polynomial endpoint values $p_{n,t}(t)$ and $p_{n,t}(-t)$, respectively.
This endpoint-value substitution yields the outside-interval surrogate
\begin{align}\label{eq:J_out_anchored}
\widehat{J}_{\mathrm{DEF},\mathrm{out}}(t)
&:=
\int_t^\infty (f(x)-f(t))^2h_X(x)\,dx
\nonumber\\&+
\int_{-\infty}^{-t} (f(x)-f(-t))^2h_X(x)\,dx .   
\end{align}

\begin{figure}[!t]
    \centering
    \includegraphics[width=\columnwidth]
        {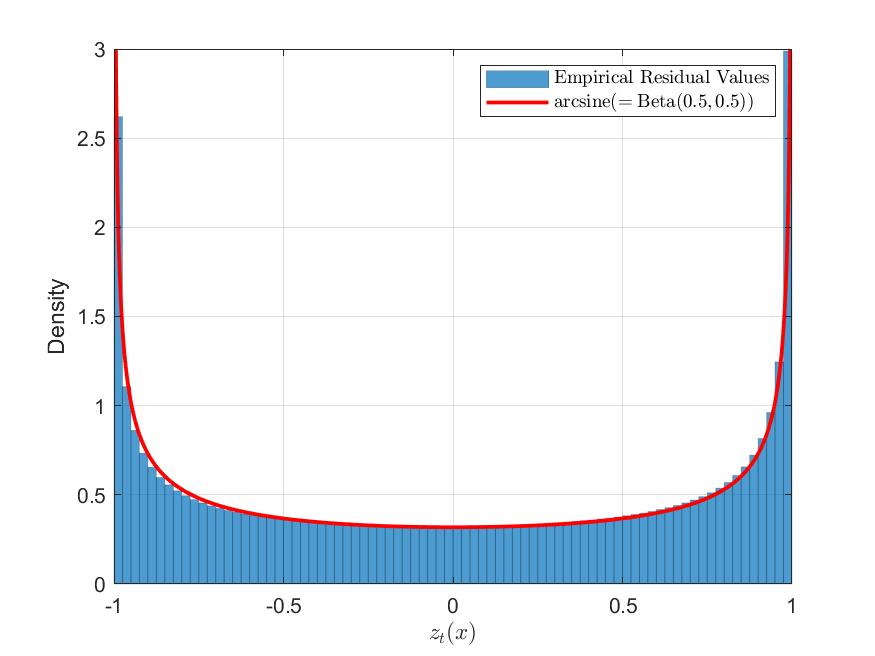}
    \caption{Comparison of the empirical distribution of the normalized residual values $z_t(x)=e_t(x)/E_n(t)$ with the arcsine law for the degree-$15$ minimax approximation of ReLU on the design interval $[-t, t]$ with $t=3$.}
    \label{fig:empirical_arcsine}
    \vspace{-4mm}
\end{figure}

For the within-interval term, we define the minimax error as
\begin{equation}\label{eq:minimax_definition}
E_n(t):=\|f-p_{n,t}\|_{\infty,[-t,t]} .
\end{equation}
Let $e_t(x):=f(x)-p_{n,t}(x)$ denote the corresponding residual, and let $z_t(x):=e_t(x)/E_n(t)\in[-1,1]$ denote its normalized version.
We model the residual values over the design interval as arcsine-distributed: 
\begin{equation}\label{eq:error-arcsine}
e_t(X)\mid \{|X|\le t\}\;\sim\;\operatorname{Arcsine}\!\left[-E_n(t),\,E_n(t)\right].
\end{equation}
As shown in Fig.~\ref{fig:empirical_arcsine}, the empirical distribution of the residual values closely follows the arcsine density $h_t(v)=1/\big(\pi\sqrt{E_n(t)^2-v^2}\big)$ for $v\in(-E_n(t),E_n(t))$.
Similar behavior is observed for the other activation functions and design radii in Section~\ref{sec:experiments}, supporting this modeling choice.

This model is motivated by the equioscillatory structure of minimax residuals. 
By the equioscillation characterization of minimax approximation~\cite{trefethen2019atap,Pachon2009Remez}, the residual $e_t$ is uniformly bounded by $E_n(t)$ on $[-t,t]$ and attains the values $+E_n(t)$ and $-E_n(t)$ with alternating signs at the $n+2$ alternation points.
The normalized residual $z_t(x)$ therefore attains the values $-1$ and $1$ with alternating signs at these $n+2$ points. 
The same alternation structure appears in the Chebyshev polynomial $T_{n+1}(x/t)=\cos\!\big((n+1)\arccos(x/t)\big)$, which attains alternating values of $-1$ and $1$ at its $n+2$ extremal points on $[-t,t]$.
Since this Chebyshev-type oscillation is exactly a cosine in the phase variable $(n+1)\arccos(x/t)$, and the values of a cosine with uniformly distributed phase follow the arcsine law, we approximate the residual-value distribution by the arcsine distribution in \eqref{eq:error-arcsine}.
Under this model, the conditional second moment is
\begin{align}\label{eq:arcsine_moment}
\mathbb{E}\!\left[e_t(X)^2\mid|X|\le t\right]
&=\int_{-E_n(t)}^{E_n(t)}\frac{v^2}{\pi\sqrt{E_n(t)^2-v^2}}\,dv \\
&=\frac{E_n(t)^2}{\pi}\int_{0}^{\pi}\cos^2\theta\,d\theta \label{eq:arcsine_moment1}\\
&=\frac{E_n(t)^2}{2},
\end{align}
where \eqref{eq:arcsine_moment1} follows from $v=E_n(t)\cos\theta$.
This yields the within-interval proxy term
\begin{equation}\label{eq:J_in_proxy}
\widehat{J}_{\mathrm{DEF},\mathrm{in}}(t):=\frac{E_n(t)^2}{2}\,\Pr(|X|\le t).
\end{equation}

Combining the within-interval term with the outside-interval term in \eqref{eq:J_out_anchored}, we define the proxy DEF objective as
\begin{align}\label{eq:J_def_hat}
\widehat{J}_{\mathrm{DEF}}(t)
&:=\widehat{J}_{\mathrm{DEF},\mathrm{in}}(t)+\widehat{J}_{\mathrm{DEF},\mathrm{out}}(t)\nonumber\\
&=\frac{E_n(t)^2}{2}\,\Pr(|X|\le t)+\widehat{J}_{\mathrm{DEF},\mathrm{out}}(t).
\end{align}
The only remaining quantity in the proxy objective \eqref{eq:J_def_hat} is the minimax error $E_n(t)$ as a function of the design radius $t$.
How this relation is obtained depends on the activation type: for positively homogeneous activations such as ReLU and LeakyReLU, the entire error--radius curve is fixed by a single numerically computed value $E_n(1)$ through an exact scaling argument (Section~\ref{subsec:def_homog}); for non-homogeneous activations such as sigmoid, tanh, and GELU, no such scaling is available, and $E_n(t)$ must instead be sampled over a range of radii and approximated by a fitted surrogate (Section~\ref{subsec:def_nonhomog}).

\subsection{Analytical Case: Homogeneous Functions}\label{subsec:def_homog}

We first consider a simple scaling property that will be used to obtain an analytical form of the proxy objective for positively homogeneous activations.
\begin{proposition}\label{proposition:homogeneous}
If $f(\cdot)$ is positively homogeneous of degree one, i.e.,
$f(\lambda x)=\lambda f(x)$ for all $\lambda>0$, then, for every
$t>0$, the minimax error on $[-t,t]$ satisfies $E_n(t)=tE_n(1)$.
\end{proposition}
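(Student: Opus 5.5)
The plan is to show that the rescaling $p(x)\mapsto t\,p(x/t)$ gives an error-preserving correspondence (up to the factor $t$) between approximants of $f$ on $[-t,t]$ and approximants of $f$ on $[-1,1]$, and that it maps $\Pi_n$ bijectively onto itself. Fix $t>0$ and define $S_t:\Pi_n\to\Pi_n$ by $(S_t p)(x):=t\,p(x/t)$. It is linear and invertible, with inverse $(S_t^{-1}q)(y)=t^{-1}q(ty)$, and it preserves degree, so it is a bijection of $\Pi_n$.

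The key step is to compare sup-norm errors. For any $p\in\Pi_n$, substitute $x=ty$ with $y\in[-1,1]$. Positive homogeneity gives $f(ty)=t f(y)$, so
\begin{equation*}
\max_{x\in[-t,t]}\bigl|f(x)-(S_tp)(x)\bigr|
=\max_{y\in[-1,1]}\bigl|t f(y)-t\,p(y)\bigr|
=t\max_{y\in[-1,1]}\bigl|f(y)-p(y)\bigr|.
\end{equation*}
Taking the infimum over $p\in\Pi_n$, and using that $S_t$ is a bijection of $\Pi_n$, the infimum over $S_tp$ runs over all of $\Pi_n$. Hence $E_n(t)=tE_n(1)$. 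As a byproduct, $p_{n,t}=S_t p_{n,1}$, where uniqueness of the minimax polynomial for continuous $f$ makes this an identity. It is not needed for the error statement, since the minimum value is attained either way.

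I do not expect a real obstacle. The only points to check are that $S_t$ is indeed onto $\Pi_n$, which is why the infimum carries over exactly rather than merely giving one inequality, and that the change of variables maps $[-t,t]$ onto $[-1,1]$. Only $\lambda=t>0$ is used, so positive homogeneity (as opposed to full homogeneity) is enough. This is what makes the result apply to ReLU and LeakyReLU. One should also note that the minimax polynomial exists, since $f$ is continuous on compact intervals: homogeneous functions of degree one are continuous away from $0$ and bounded by $|x|\max(|f(1)|,|f(-1)|)$ near $0$, so they are continuous there as well.
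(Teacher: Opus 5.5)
Your proof is correct and is essentially the paper's argument. The paper proves $E_n(t)\le tE_n(1)$ by rescaling the $[-1,1]$ minimizer to $t\,s^*_{n,1}(x/t)$, and proves $E_n(t)\ge tE_n(1)$ by pulling back an arbitrary $s\in\Pi_n$ to $\frac{1}{t}s(tu)$; these are exactly the two directions of your bijection $S_t$ written as separate inequalities. Your remarks on continuity (which guarantees existence of the minimizer) and on $p_{n,t}=S_t p_{n,1}$ are correct, though not needed for the error identity.
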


\begin{IEEEproof}
Let $s^*_{n,1}\in\Pi_n$ be a minimizer on $[-1,1]$. 
Define $s_{n,t}(x):=t\,s^*_{n,1}(x/t)$ for $x\in[-t,t].$
Since $s^*_{n,1}\in\Pi_n$, we also have $s_{n,t}\in\Pi_n$. 
Writing $x=tu$, where $u\in[-1,1]$, and using the positive
homogeneity of $f$, we obtain
\begin{align}
E_n(t) &\le \|f-s_{n,t}\|_{\infty,[-t,t]} \nonumber \\
&=
\max_{|u|\le 1}
\left| f(tu)-t\,s^*_{n,1}(u) \right| \nonumber\\
&=
\max_{|u|\le 1}
\left| t f(u)-t\,s^*_{n,1}(u) \right| \nonumber\\
&=
t\|f-s^*_{n,1}\|_{\infty,[-1,1]}
=
tE_n(1).
\end{align}
Hence, $E_n(t)\le tE_n(1)$.

Conversely, let $s\in\Pi_n$ be arbitrary and define $\widetilde{s}_{n,1}(u):=\frac{1}{t}s(tu)$ for $u\in[-1,1]$.
Then $\widetilde{s}_{n,1}\in\Pi_n$. Again writing $x=tu$, we have
\begin{align}
\|f-s\|_{\infty,[-t,t]}
&=
\max_{|u|\le 1}
\left| f(tu)-s(tu) \right| \nonumber\\
&=
\max_{|u|\le 1}
\left| t f(u)-t\,\widetilde{s}_{n,1}(u) \right| \nonumber\\
&=
t\|f-\widetilde{s}_{n,1}\|_{\infty,[-1,1]}
\ge
tE_n(1).
\end{align}
Since this holds for every $s\in\Pi_n$, taking the infimum over
$s$ yields $E_n(t)\ge tE_n(1)$. Combining the two inequalities gives
$E_n(t)=tE_n(1)$.
\end{IEEEproof}

To parameterize the design radius relative to the input distribution, we define the normalized radius $\alpha:=t/\sigma$, where $\sigma$ denotes the standard deviation of the input distribution.
We next derive a characterization of the proxy objective under a Gaussian pre-activation model.
The Gaussian model is used here as an analytically tractable reference case, since the required tail probabilities and truncated moments admit closed-form expressions; for other input densities, the same DEF objective and proxy construction can be evaluated numerically.
\begin{theorem}
\label{thm:homogeneous_convexity}
Let $X\sim\mathcal{N}(0,\sigma^2)$ and let $f$ be positively homogeneous of degree one, with $B_f:=f(1)^2+f(-1)^2>0$ and $\varepsilon_n:=E_n(1)>0$.
Under the endpoint-value substitution, the proxy objective $\widehat{J}_{\mathrm{DEF}}(t)$ is strictly convex on $(0,\infty)$ and admits a unique global minimizer $t^*=\sigma\alpha^*$ on $(0,\infty)$, where the optimal normalized radius $\alpha^*$ is the unique positive solution of
\begin{equation}\label{eq:DEF_global_optimal}
\frac{\varepsilon_n^2}{2} \Bigl[ \alpha \bigl( 2\Phi(\alpha) - 1 \bigr) + \alpha^2\phi(\alpha) \Bigr] + B_f \bigl[ \alpha Q(\alpha) - \phi(\alpha) \bigr] = 0,\end{equation}
where $\phi(x)$ is the standard normal PDF, and $\Phi(\cdot)$ denotes the standard normal cumulative distribution function (CDF) with $Q(\alpha) := 1 - \Phi(\alpha)$.
\end{theorem}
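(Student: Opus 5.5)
The plan is to reduce $\widehat{J}_{\mathrm{DEF}}$ to an explicit function of the normalized radius $\alpha=t/\sigma$ and then study its first and second derivatives. Since $t\mapsto \alpha=t/\sigma$ is linear, strict convexity and the location of the minimizer transfer directly between $t$ and $\alpha$.

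\textbf{Step 1: closed form of the proxy.} By Proposition~\ref{proposition:homogeneous}, $E_n(t)=t\varepsilon_n$. Since $\Pr(|X|\le t)=2\Phi(\alpha)-1$, the within-interval proxy \eqref{eq:J_in_proxy} equals $\sigma^2\frac{\varepsilon_n^2}{2}\alpha^2\bigl(2\Phi(\alpha)-1\bigr)$.

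For the outside term \eqref{eq:J_out_anchored}, positive homogeneity gives the following. For $x>t>0$ we have $f(x)-f(t)=(x-t)f(1)$. For $x<-t$ we have $f(x)-f(-t)=(|x|-t)f(-1)$. By symmetry of the Gaussian, the two integrals combine into $B_f\int_t^\infty (x-t)^2h_X(x)\,dx$. With the substitution $x=\sigma u$ and the standard truncated-moment identities $\int_\alpha^\infty u\phi=\phi(\alpha)$ and $\int_\alpha^\infty u^2\phi=\alpha\phi(\alpha)+Q(\alpha)$, this becomes $\sigma^2 B_f\bigl[(1+\alpha^2)Q(\alpha)-\alpha\phi(\alpha)\bigr]$. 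Hence
\begin{equation*}
\widehat{J}_{\mathrm{DEF}}(\sigma\alpha)=\sigma^2 G(\alpha),\qquad G(\alpha):=\frac{\varepsilon_n^2}{2}\alpha^2\bigl(2\Phi(\alpha)-1\bigr)+B_f\bigl[(1+\alpha^2)Q(\alpha)-\alpha\phi(\alpha)\bigr].
\end{equation*}

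\textbf{Step 2: first-order condition.} Using $\phi'(\alpha)=-\alpha\phi(\alpha)$ and $Q'=-\phi$, direct differentiation gives
\begin{equation*}
G'(\alpha)=2\Bigl\{\frac{\varepsilon_n^2}{2}\bigl[\alpha(2\Phi(\alpha)-1)+\alpha^2\phi(\alpha)\bigr]+B_f\bigl[\alpha Q(\alpha)-\phi(\alpha)\bigr]\Bigr\}.
\end{equation*}
So stationary points are exactly the solutions of \eqref{eq:DEF_global_optimal}. For existence, note that $G'(0^+)=-2B_f\phi(0)<0$ because $B_f>0$. As $\alpha\to\infty$, we have $G'(\alpha)\sim\varepsilon_n^2\alpha\to\infty$, since the $B_f$ bracket tends to $0$. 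The intermediate value theorem then gives a positive root.

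\textbf{Step 3: strict convexity (the main obstacle).} Differentiating again, and using $\frac{d}{d\alpha}[\alpha Q-\phi]=Q$, gives
\begin{equation*}
G''(\alpha)=\varepsilon_n^2\,g(\alpha)+2B_fQ(\alpha),\qquad g(\alpha):=\bigl(2\Phi(\alpha)-1\bigr)+4\alpha\phi(\alpha)-\alpha^3\phi(\alpha).
\end{equation*}
The term $2B_fQ>0$ is harmless. The difficulty is the negative term $-\alpha^3\phi$ in $g$, so the plan is to show $g>0$ on $(0,\infty)$ by splitting into two cases.

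For $0<\alpha<\sqrt6$: since $\phi$ is decreasing on $[0,\infty)$, we have $2\Phi(\alpha)-1=2\int_0^\alpha\phi\ge 2\alpha\phi(\alpha)$. Therefore $g(\alpha)\ge\alpha\phi(\alpha)(6-\alpha^2)>0$.

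For $\alpha\ge\sqrt6$: the function $\alpha^3\phi(\alpha)$ is maximized at $\alpha=\sqrt3$, where it equals $3\sqrt3\,\phi(\sqrt3)\approx0.46$. Meanwhile $2\Phi(\alpha)-1\ge 2\Phi(\sqrt6)-1\approx0.98$. Hence again $g>0$.

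Thus $G''>0$ on $(0,\infty)$, so $G$ and therefore $\widehat{J}_{\mathrm{DEF}}$ are strictly convex there. Strict convexity makes $G'$ strictly increasing, so the positive root from Step 2 is unique and is the global minimizer $\alpha^*$. This gives $t^*=\sigma\alpha^*$.
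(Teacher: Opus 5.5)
Your proposal is correct and follows the paper's proof. The steps match: the same closed form $\sigma^2 G(\alpha)$, the same second derivative (your $\varepsilon_n^2 g$ is the paper's $\frac{\varepsilon_n^2}{2}\psi$ with $\psi=2g$), and the same sign argument on $G'$ at $0^+$ and $\infty$ to get a unique minimizer $\alpha^*$. The only difference is how you show $g>0$: the paper uses $\psi'(\alpha)=2\phi(\alpha)(\alpha^2-1)(\alpha^2-6)$ to place the minima of $\psi$ at $0$ and $\sqrt6$ and then checks $\psi(\sqrt6)>0$, whereas you use $2\Phi(\alpha)-1\ge 2\alpha\phi(\alpha)$ on $(0,\sqrt6)$ plus a numerical bound beyond $\sqrt6$, which is an equally valid alternative that also relies on a numerical check.
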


\begin{IEEEproof}
By Proposition~\ref{proposition:homogeneous}, the within-interval proxy term becomes
\begin{align}\label{eq:J_DEF_inner}
\frac{E_n(t)^2}{2}\Pr(|X|\le \sigma\alpha)
&=
\frac{\sigma^2\alpha^2\varepsilon_n^2}{2}\bigl(2\Phi(\alpha)-1\bigr).
\end{align}
Under the endpoint-value substitution, the outside-interval term is
\begin{align}
\widehat J_{\mathrm{DEF,out}}(\sigma\alpha)
&=
f(1)^2\int_t^\infty (x-t)^2\phi_\sigma(x)\,dx \nonumber\\
&+
f(-1)^2\int_{-\infty}^{-t}(x+t)^2\phi_\sigma(x)\,dx \label{eq:hatJ_outside1}\\ 
&=\sigma^2 B_f\Bigl[(1+\alpha^2)Q(\alpha)-\alpha\phi(\alpha)\Bigr],\label{eq:hatJ_outside2}
\end{align}
where \eqref{eq:hatJ_outside1} follows from the positive homogeneity of $f$, $f(t)=tf(1)$ and $f(-t)=tf(-1)$ for $t>0$, and $\phi_\sigma(\cdot)$ denotes the PDF of $\mathcal{N}(0,\sigma^2)$.
Combining \eqref{eq:J_DEF_inner} and \eqref{eq:hatJ_outside2}, we obtain a closed-form expression for the proxy objective $\widehat{J}_{\mathrm{DEF}}(\sigma\alpha)$. 
Differentiating twice with respect to $\alpha$ gives
\begin{equation}
\frac{d^2}{d\alpha^2}\widehat{J}_{\mathrm{DEF}}(\sigma\alpha) = \sigma^2 \left( \frac{\varepsilon_n^2}{2}\psi(\alpha) + 2B_f Q(\alpha) \right),
\end{equation}
where $\psi(\alpha) := 2(2\Phi(\alpha)-1) + 2\alpha(4-\alpha^2)\phi(\alpha)$. 
To establish strict convexity, it suffices to show $\psi(\alpha) \ge 0$ for $\alpha \ge 0$, as $B_f > 0$ and $Q(\alpha) > 0$.
The derivative $\psi'(\alpha)=2\phi(\alpha)(\alpha^2-1)(\alpha^2-6)$ shows that $\psi$ increases on $(0,1)$, decreases on $(1,\sqrt{6})$, and increases again on $(\sqrt{6},\infty)$. 
Thus, the only interior local minimum occurs at $\alpha=\sqrt{6}$, while $\alpha=0$ serves as the boundary minimum. 
Since $\psi(0)=0$ and $\psi(\sqrt{6})>0$, we have $\psi(\alpha)> 0$ for all $\alpha> 0$.
Hence, the second derivative of $\widehat{J}_{\mathrm{DEF}}(\sigma\alpha)$ is always positive for all $\alpha> 0$, which establishes the strict convexity of $\widehat{J}_{\mathrm{DEF}}(\sigma\alpha)$ as a function of $\alpha$.
Since $t=\sigma\alpha$ is an affine change of variables, $\widehat{J}_{\mathrm{DEF}}(t)$ is also strictly convex in $t$.
Moreover, 
$\frac{d}{d\alpha}\widehat{J}_{\mathrm{DEF}}(\sigma\alpha)\big|_{\alpha=0}
=
-2\sigma^2B_f\phi(0)<0$,
whereas
$\frac{d}{d\alpha}\widehat{J}_{\mathrm{DEF}}(\sigma\alpha)\to\infty$
as $\alpha\to\infty$.
Thus, the unique minimizer lies in $(0,\infty)$ and is characterized by the first-order condition
$\frac{d}{d\alpha}\widehat{J}_{\mathrm{DEF}}(\sigma\alpha)=0$,
which yields \eqref{eq:DEF_global_optimal}.
\end{IEEEproof}

\begin{example}[ReLU]
For $f(x)=\max\{x,0\}$, we have $B_f=f(1)^2+f(-1)^2=1$.
Under the DEF proxy objective with the endpoint-value substitution, the optimal normalized radius $\alpha^*$ is obtained by solving the following equation:
\begin{equation}\label{eq:relu_opt_cond}
\frac{\varepsilon_n^2}{2}\Bigl(\alpha\bigl(2\Phi(\alpha)-1\bigr)+\alpha^2\phi(\alpha)\Bigr)
+
\Bigl[\alpha Q(\alpha)-\phi(\alpha)\Bigr]
=
0.
\end{equation}
\end{example}

\begin{example}[LeakyReLU]
Let $\lambda\in[0,1)$. 
Define the LeakyReLU by
\begin{equation}\label{eq:lrelu_def}
f_\lambda(x)=
\begin{cases}
x, & x\ge 0,\\
\lambda x, & x<0.
\end{cases}
\end{equation}
We have $B_f=f_\lambda(1)^2+f_\lambda(-1)^2=1+\lambda^2$. 
Under the DEF proxy objective with the endpoint-value substitution, the optimal normalized radius $\alpha^*$ is obtained by solving the following equation:
\begin{equation}\label{eq:lrelu_opt_cond}
\frac{\varepsilon_n^2}{2}\Bigl(\alpha\bigl(2\Phi(\alpha)-1\bigr)+\alpha^2\phi(\alpha)\Bigr)
+
(1+\lambda^2)\Bigl(\alpha Q(\alpha)-\phi(\alpha)\Bigr)
=
0.
\end{equation}
\end{example}

For both ReLU and LeakyReLU, the optimal radius is computed through the following procedure.

\begin{enumerate}
\item Compute the unit-interval minimax error
$\varepsilon_n=E_n(1)$ by solving the degree-$n$ minimax problem on
$[-1,1]$.

\item Substitute the activation-dependent constant $B_f$ into the optimality condition in Theorem~\ref{thm:homogeneous_convexity}, where $B_f=1$ for ReLU and $B_f=1+\lambda^2$ for LeakyReLU. 
The resulting scalar equation in $\alpha$ is then solved by a bisection method. 

\item Set the optimal design radius as $t^*=\sigma\alpha^*,$ and construct the final minimax polynomial $p_{n,t^*}$ on $[-t^*,t^*]$.
\end{enumerate}
Although the homogeneous case requires a numerical minimax computation to obtain the unit-interval error
$\varepsilon_n = E_n(1)$, its dependence on the design radius follows exactly from the scaling law $E_n(t)=t\varepsilon_n$.
We now turn to non-homogeneous activations, for which no such scaling law is generally available and the minimax-error curve
$E_n(t)$ must be estimated.

\subsection{Numerical Surrogate for Non-Homogeneous Functions}\label{subsec:def_nonhomog}

For activations without degree-one positive homogeneity, such as sigmoid, tanh, and GELU, the scaling relation $E_n(t)=tE_n(1)$ no longer holds, so $E_n(t)$ cannot be recovered from a single reference value and must instead be characterized over the range of candidate radii.
We therefore estimate the error--radius curve $t\mapsto E_n(t)$ numerically by sampling minimax errors over candidate design radii and fitting a smooth surrogate $\widetilde E_n(t)$.

Let $\mathcal{T}:=\{t_1,\ldots,t_K\}$ denote the grid of sampled design radii.
For each $t_j\in\mathcal{T}$, we compute
\begin{equation}
p_{n,t_j}=
\arg\min_{p\in\Pi_n}
\|f-p\|_{\infty,[-t_j,t_j]}
\end{equation}
using Chebfun \cite{Driscoll2014chebfun}, and evaluate
\begin{equation}
E_n(t_j)
=
\|f-p_{n,t_j}\|_{\infty,[-t_j,t_j]}.
\end{equation}
Since $E_n(t)$ varies over the candidate range, we construct the surrogate $\widetilde E_n(t)$ by least-squares polynomial regression in $t$; in our experiments we use a degree-$15$ polynomial fit.
The fitted $\widetilde E_n(t)$ is then substituted for $E_n(t)$ in the proxy objective \eqref{eq:J_def_hat}.
 
Fig.~\ref{fig:def_proxy_compare} compares, for representative non-homogeneous functions, the tractable proxy $\widehat{J}_{\mathrm{DEF}}(t)$---evaluated with this fitted surrogate $\widetilde E_n(t)$---against the numerically evaluated ideal objective $J_{\mathrm{DEF}}(t)$ over the sampled candidate radii.
While $J_{\mathrm{DEF}}(t)$ is obtained by direct numerical evaluation of \eqref{eq:J_def_def}, $\widehat{J}_{\mathrm{DEF}}(t)$ combines the arcsine-residual approximation for the within-interval term, and the fitted surrogate $\widetilde E_n(t)$ for the minimax-error curve.
As shown in Fig.~\ref{fig:def_proxy_compare}, $\widehat{J}_{\mathrm{DEF}}(t)$ almost exactly tracks $J_{\mathrm{DEF}}(t)$ near the optimum and accurately reproduces the minimizing radius $t^*$.
This confirms that the proposed proxy preserves the interval selection behavior of the ideal DEF objective and, in particular, validates the surrogate approximation at the radii relevant for interval selection.
The corresponding optimal radii and MSE values are summarized in Tables~\ref{tab:mse_comparison_Gaussian} and~\ref{tab:mse_comparison_Laplace}.

\begin{figure}[!ht]
        \centering
        \subfloat[]{\includegraphics[width=0.45\textwidth]{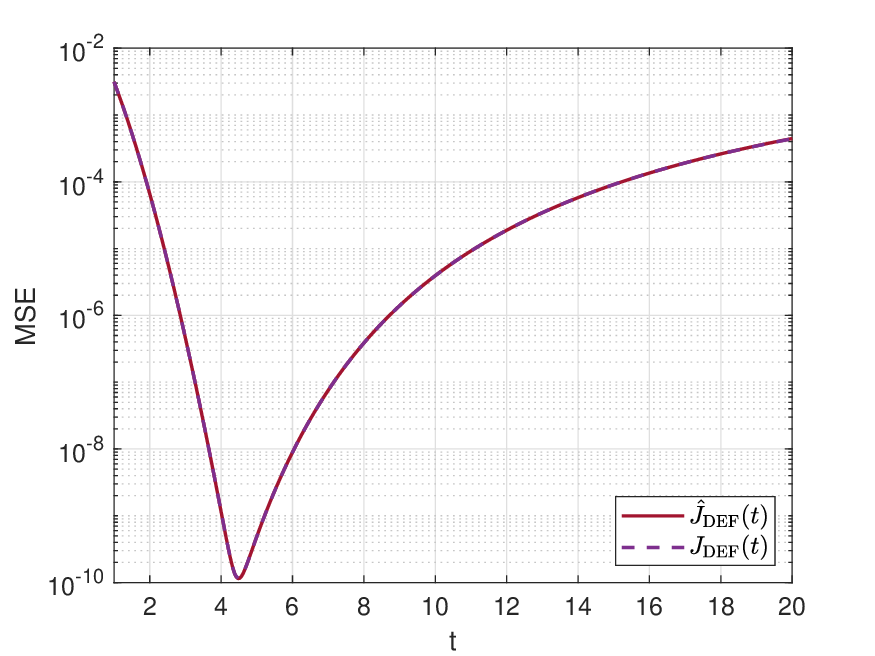}}
        \label{fig:sigmoid_analytic}
        \hfill
        \subfloat[]{\includegraphics[width=0.45\textwidth]{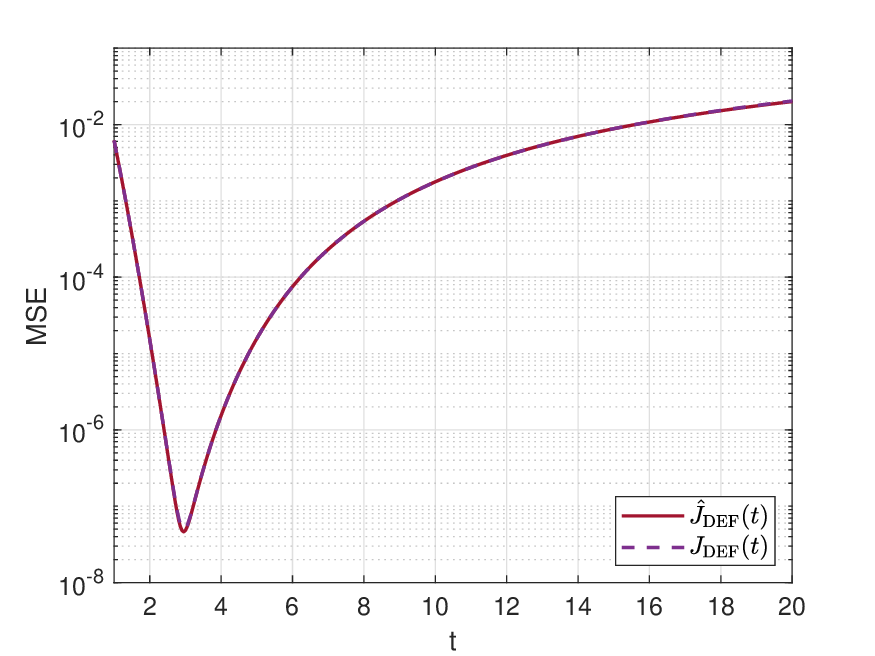}}
        \label{fig:tanh_analytic}
        \hfill
        \subfloat[]{\includegraphics[width=0.45\textwidth]{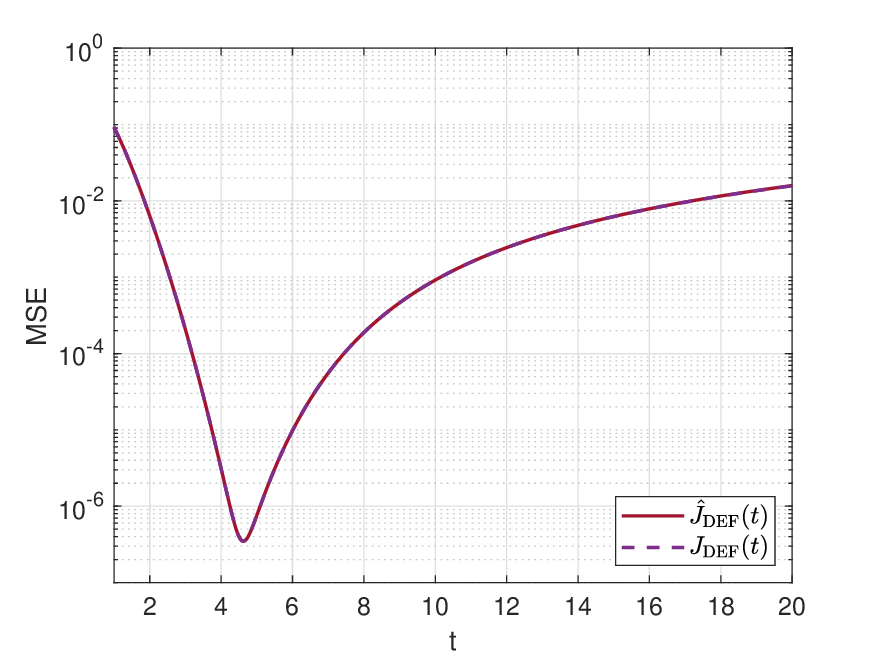}}
        \label{fig:gelu_analytic}
        \caption{Comparison between the numerically evaluated DEF objective $J_{\mathrm{DEF}}(t)$ and the proxy objective $\widehat{J}_{\mathrm{DEF}}(t)$ for three representative nonlinearities: (a) sigmoid, 
        (b) tanh, 
        and (c) GELU.}
    	\vspace{-4mm}
        \label{fig:def_proxy_compare}
    \end{figure}

\section{Interval Optimization via DEP}\label{sec:dep_impl}

In this section, we connect the DEF-based objective to an HE-realizable DEP construction.
The hard-clipping map $D_t(\cdot)$ used in the ideal DEF objective is non-polynomial and therefore cannot be evaluated directly under HE.
As in~\cite{Cheon2022dep}, we replace $D_t(\cdot)$ with a polynomial DEP composition, which provides interval control on a conservative input interval while approximately preserving inputs in a smaller high-probability region.

Let $R_0$ denote a conservative covered-input radius for the pre-activation magnitude.
That is, $R_0$ is chosen so that $\Pr(|X|>R_0)$ is negligible under the pre-activation distribution.
For a given design radius $t \in (0, R_0)$, the DEP composition is then used to map the covered input interval $[-R_0,R_0]$ into the design interval $[-t,t]$.
For each design radius $t$, we choose a base DEP
\begin{equation}
B_t(\cdot)\in\mathcal{D}(\delta_t,\gamma t,t,Lt),
\end{equation}
where $L>1$ is the extension factor, $\delta_t$ controls the identity-preservation error, and $0<\gamma<1$ determines the preservation radius $r=\gamma t$.
Equivalently, this corresponds to the setting in Section~\ref{subsec:def}
\begin{equation}
R_1:=t,\qquad R_2:=Lt,\qquad r:=\gamma t .
\end{equation}
The number of DEP stages is chosen as
\begin{equation}
m_{L}(t)
:=
\max\left\{
0,
\left\lceil
\log_L\left(\frac{R_0}{t}\right)
\right\rceil
\right\}.
\end{equation}
When $t\ge R_0$, no DEP stage is required and $m_{L}(t)=0$.
The $i$-th scaled DEP stage is defined as 
\begin{equation}\label{eq:dep_recur}
B_{t,i}(x)
:=
L^i B_t\left(\frac{x}{L^i}\right),
\qquad
i=0,1,\ldots,m_{L}(t)-1 .
\end{equation}
The resulting DEP composition is
\begin{equation}
g^{(m_{L}(t))}(x)
:=
\bigl(B_{t,0}\circ B_{t,1}\circ\cdots\circ B_{t,m_{L}(t)-1}\bigr)(x).
\end{equation}
By the DEP conditions, the base $B_t(\cdot)$ maps the interval $[-Lt,Lt]$ into the controlled interval $[-t,t]$. 
Consequently, the scaled stage $B_{t,i}$ maps $[-L^{i+1}t,L^{i+1}t]$ into $[-L^i t,L^i t]$. 
Since $m_{L}(t)$ is chosen so that $L^{m_{L}(t)}t\ge R_0$, the composition satisfies
\begin{equation}\label{eq:gt_bound}
g^{(m_{L}(t))}(x)\in[-t,t],
\qquad
\forall x\in[-R_0,R_0].
\end{equation}
This composition serves as an HE-realizable polynomial substitute for the non-polynomial hard-clipping DEF $D_t$. 
On the covered input interval $[-R_0,R_0]$, it maps inputs into the design interval $[-t,t]$.

Using the DEP-realizable map $g^{(m_{L}(t))}$, we define the DEP evaluator as
\begin{equation}\label{eq:q_dep_final}
q_{n,t,\mathrm{DEP}}(x)
:=
p_{n,t}\bigl(g^{(m_{L}(t))}(x)\bigr).
\end{equation}
The corresponding MSE is
\begin{equation}\label{eq:J_dep_final}
J_{\mathrm{DEP}}(t)
:=
\mathbb{E}
\left[
\bigl(f(X)-q_{n,t,\mathrm{DEP}}(X)\bigr)^2
\right].
\end{equation}

To relate this objective to the ideal DEF analysis, recall that
$q_{n,t,\mathrm{DEF}}(x)=p_{n,t}(D_t(x))$.
Then
\begin{align}
f(X)-q_{n,t,\mathrm{DEP}}(X)
&=
\bigl(f(X)-q_{n,t,\mathrm{DEF}}(X)\bigr)
\nonumber\\
&+
\bigl(q_{n,t,\mathrm{DEF}}(X)-q_{n,t,\mathrm{DEP}}(X)\bigr).
\end{align}
We define the DEP implementation error as
\begin{align}\label{eq:J_impl_def}
J_{\mathrm{impl}}(t)
&:=
\mathbb{E}
\left[
\bigl(q_{n,t,\mathrm{DEF}}(X)-q_{n,t,\mathrm{DEP}}(X)\bigr)^2
\right],\\
&\ =
\mathbb{E}
\left[
\bigl(
p_{n,t}(D_t(X))
-
p_{n,t}(g^{(m_{L}(t))}(X))
\bigr)^2
\right].
\end{align}

By the triangle inequality in $L_2$, equivalently Minkowski's inequality~\cite{Stein1970singular},
\begin{equation}\label{eq:J_dep_gap_bound}
J_{\mathrm{DEP}}(t)
\le
\left(
\sqrt{J_{\mathrm{DEF}}(t)}
+
\sqrt{J_{\mathrm{impl}}(t)}
\right)^2.
\end{equation}
Thus, controlling $J_{\mathrm{impl}}(t)$ directly controls the gap between the ideal DEF objective and its DEP realization. 
We next derive an upper bound on $J_{\mathrm{impl}}(t)$ under a covered-input model.

\begin{lemma}[{\cite[Theorem 3]{Cheon2022dep}}]\label{lemma:dep}
Assume that the DEP stages are constructed with extension factor $L>1$ and parameter $\delta_t$, and that the preservation region is $[-r,r]$ with $r=\gamma t$.
Then the composed mapping $g^{(m_{L}(t))}$ satisfies 
\begin{equation}\label{eq:pn_comp_bound}
\left\|
p_{n,t}
-
p_{n,t}\circ g^{(m_{L}(t))}
\right\|_{\infty,[-r,r]}
\le
M_r r^3
\frac{L^2}{L^2-1}
\delta_t ,
\end{equation}
\end{lemma}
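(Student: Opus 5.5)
The plan is to prove the bound in two steps. First, show that the composed map $g^{(m_L(t))}$ stays uniformly close to the identity on $[-r,r]$, with error at most $r^3\frac{L^2}{L^2-1}\delta_t$. Then transfer this through $p_{n,t}$ using a Lipschitz (mean-value) bound. The constant $M_r$ will be the Lipschitz constant of $p_{n,t}$ on the relevant range, $M_r:=\max_{|y|\le r}|p_{n,t}'(y)|$. This is the natural reading of the statement, since the right-hand side carries no other dependence on $p_{n,t}$.

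\textbf{Step 1: per-stage errors.} Fix $x\in[-r,r]$ and set $x_{m}:=x$ and $x_{i}:=B_{t,i}(x_{i+1})$ for $i=m-1,\dots,0$, so that $x_0=g^{(m_L(t))}(x)$.
\begin{itemize}
\item By \eqref{eq:dep2}, the base $B_t$ has derivative in $[0,1]$ on $[-r,r]$. Since $B_t(0)=0$, which follows from \eqref{eq:dep1} at $x=0$, it satisfies $|B_t(y)|\le|y|$ there.
\item The scaled stage $B_{t,i}$ has derivative $B_t'(y/L^i)\in[0,1]$ for $|y|\le L^i r$. By induction, every intermediate value satisfies $|x_i|\le|x|\le r\le L^i r$, so every stage acts inside its own preservation region.
\item Applying \eqref{eq:dep1} to the scaled stage gives
\[
|y-B_{t,i}(y)|=L^i\left|\tfrac{y}{L^i}-B_t\!\left(\tfrac{y}{L^i}\right)\right|\le L^i\delta_t\frac{|y|^3}{L^{3i}}=\delta_t\frac{|y|^3}{L^{2i}} .
\]
\end{itemize}

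\textbf{Step 2: sum the stage errors.} By telescoping,
\[
|x-g^{(m_L(t))}(x)|\le\sum_{i=0}^{m-1}|x_{i+1}-x_i|\le\sum_{i\ge0}\delta_t\frac{r^3}{L^{2i}}=\delta_t r^3\frac{L^2}{L^2-1}.
\]

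\textbf{Step 3: compose with the polynomial.} Both $x$ and $g^{(m_L(t))}(x)$ lie in $[-r,r]$. The mean value theorem therefore gives
\[
|p_{n,t}(x)-p_{n,t}(g^{(m_L(t))}(x))|\le M_r\,|x-g^{(m_L(t))}(x)| .
\]
Taking the supremum over $|x|\le r$ yields \eqref{eq:pn_comp_bound}. The case $m_L(t)=0$ is trivial, because the composition is then the identity.

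The main obstacle is the invariance step: showing that each intermediate value stays inside the scaled preservation region $[-L^i r,L^i r]$, so that both \eqref{eq:dep1} and the contraction property of \eqref{eq:dep2} apply at every stage. I expect the monotone contraction $|B_{t,i}(y)|\le|y|$ to handle this. I would double-check whether $M_r$ is intended to be the derivative bound on $[-r,r]$ or on $[-t,t]$; the argument above works with either, since $[-r,r]\subset[-t,t]$.
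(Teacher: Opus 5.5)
Your proof is correct. Note that the paper does not actually prove this statement: it imports the bound from \cite[Theorem 3]{Cheon2022dep} and only sketches the hypotheses. Your argument is a self-contained derivation from the DEP conditions of Section~\ref{subsec:def}. Telescoping the scaled per-stage deviation $\delta_t|y|^3/L^{2i}$ along the chain $x_m,\dots,x_0$ gives the geometric factor $L^2/(L^2-1)$ regardless of $m_L(t)$, and the mean value theorem then carries this through $p_{n,t}$.

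Compared with the bare citation, your version gains some precision. It shows that only \eqref{eq:dep1} and \eqref{eq:dep2} are needed. The outer conditions \eqref{eq:dep3}--\eqref{eq:dep4} and the covering requirement $L^{m_L(t)}t\ge R_0$ play no role here; they enter only the tail term of Proposition~\ref{prop:Jimpl_bound}. It also yields the sharper pointwise estimate $|p_{n,t}(x)-p_{n,t}(g^{(m_L(t))}(x))|\le M_r\frac{L^2}{L^2-1}\delta_t|x|^3$.

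Your invariance step is what justifies taking $M_r$ over $[-r,r]$ rather than $[-t,t]$: each stage is a same-sign contraction on its preservation region, so $g^{(m_L(t))}(x)$ lies between $0$ and $x$. This settles the question you raise at the end, and it agrees with the paper's definition $M_r=\max_{|x|\le r}|p'_{n,t}(x)|$.
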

where $M_r=\max_{|x|\le r}|p'_{n,t}(x)|$.
\begin{proposition}\label{prop:Jimpl_bound}
Let $R_0$ be a conservative bound on the pre-activation magnitude, and assume that $|X|\le R_0$ under the covered-input model.
Assume further that the DEP composition has enough stages to cover $[-R_0,R_0]$, i.e.,
$L^{m_{L}(t)}t\ge R_0$.
Then the implementation error satisfies
\begin{align}
J_{\mathrm{impl}}(t)
&\le
\left(
M_r r^3
\frac{L^2}{L^2-1}
\delta_t
\right)^2
\Pr(|X|\le r)
\nonumber\\
&\quad+
4\|p_{n,t}\|_{\infty,[-t,t]}^2
\Pr(r<|X|\le R_0).
\label{eq:Jimpl_bound_sup}
\end{align}
\end{proposition}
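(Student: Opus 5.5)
We split the expectation defining $J_{\mathrm{impl}}(t)$ according to the magnitude of $X$. Under the covered-input model $|X|\le R_0$ almost surely, so
\begin{equation*}
J_{\mathrm{impl}}(t)=\mathbb{E}\bigl[\Delta(X)^2\mathbf{1}\{|X|\le r\}\bigr]+\mathbb{E}\bigl[\Delta(X)^2\mathbf{1}\{r<|X|\le R_0\}\bigr],
\end{equation*}
where $\Delta(x):=p_{n,t}(D_t(x))-p_{n,t}(g^{(m_L(t))}(x))$. We bound $\Delta(x)^2$ pointwise on each region by a constant. Each term is then at most that constant times the probability of the region, which gives the two summands of \eqref{eq:Jimpl_bound_sup}.

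On the preservation region $|x|\le r=\gamma t<t$, the hard clipping acts as the identity, $D_t(x)=x$. Hence $\Delta(x)=p_{n,t}(x)-p_{n,t}(g^{(m_L(t))}(x))$. Lemma~\ref{lemma:dep} bounds this in sup norm over $[-r,r]$ by $M_r r^3\frac{L^2}{L^2-1}\delta_t$. Squaring and integrating against $h_X$ over $\{|X|\le r\}$ yields the first term.

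On the region $r<|x|\le R_0$, we use only crude boundedness. First, $D_t(x)\in[-t,t]$ by definition. Second, $g^{(m_L(t))}(x)\in[-t,t]$ by \eqref{eq:gt_bound}, which holds because $L^{m_L(t)}t\ge R_0$. Both arguments of $p_{n,t}$ therefore lie in $[-t,t]$, and each evaluation is bounded in absolute value by $\|p_{n,t}\|_{\infty,[-t,t]}$. The triangle inequality gives $|\Delta(x)|\le 2\|p_{n,t}\|_{\infty,[-t,t]}$, so $\Delta(x)^2\le 4\|p_{n,t}\|_{\infty,[-t,t]}^2$. Integrating over $\{r<|X|\le R_0\}$ gives the second term.

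The main point needing care is the validity of \eqref{eq:gt_bound} across the whole covered interval. It requires that each stage $B_{t,i}$ maps $[-L^{i+1}t,L^{i+1}t]$ into $[-L^it,L^it]$. On $[-\gamma t,\gamma t]$ this follows from the monotonicity condition \eqref{eq:dep2}, since $d(\pm r)$ lies in $(-t,t)$. Outside that range it follows from \eqref{eq:dep3}--\eqref{eq:dep4}, applied by induction on the stages from the innermost one $i=m_L(t)-1$ outward. This is asserted in the text before \eqref{eq:gt_bound}, and we simply invoke it. The remaining steps are routine.
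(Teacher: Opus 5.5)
Your proof is correct and follows the paper's argument essentially verbatim. It uses the same split into $\{|X|\le r\}$ and $\{r<|X|\le R_0\}$, applies Lemma~\ref{lemma:dep} on the first region after noting $D_t(x)=x$ because $r=\gamma t<t$, and uses the crude bound $2\|p_{n,t}\|_{\infty,[-t,t]}$ via \eqref{eq:gt_bound} on the second. Your added sketch of why \eqref{eq:gt_bound} holds goes beyond the paper, which simply invokes it; one small slip there is that for stage $B_{t,i}$ the monotone region is $[-L^i\gamma t,L^i\gamma t]$, not $[-\gamma t,\gamma t]$.
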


\begin{IEEEproof}
Under the covered-input condition $|X|\le R_0$, we split the input domain into the preservation region $\{|X|\le r\}$ and the remaining region $\{r<|X|\le R_0\}$.

On $\{|X|\le r\}$, we have $D_t(X)=X$ because $r=\gamma t<t$.
Therefore, Lemma~\ref{lemma:dep} gives
\begin{align}
&\mathbb{E}
\left[
\bigl(
p_{n,t}(D_t(X))
-
p_{n,t}(g^{(m_{L}(t))}(X))
\bigr)^2
\mathbf{1}_{\{|X|\le r\}}
\right]
\nonumber\\
&\quad\le
\left(
M_r r^3
\frac{L^2}{L^2-1}
\delta_t
\right)^2
\Pr(|X|\le r).
\label{eq:Jimpl_core_new}
\end{align}

On $\{r<|X|\le R_0\}$, the hard-clipping map satisfies $D_t(X)\in[-t,t]$, and \eqref{eq:gt_bound} gives $g^{(m_{L}(t))}(X)\in[-t,t]$.
Hence, both arguments of $p_{n,t}(\cdot)$ lie in $[-t,t]$, and therefore
\begin{equation}
\left|
p_{n,t}(D_t(X))
-
p_{n,t}(g^{(m_{L}(t))}(X))
\right|
\le
2\|p_{n,t}\|_{\infty,[-t,t]}.
\end{equation}
It follows that
\begin{align}
&\mathbb{E}
\left[
\bigl(
p_{n,t}(D_t(X))
-
p_{n,t}(g^{(m_{L}(t))}(X))
\bigr)^2
\mathbf{1}_{\{r<|X|\le R_0\}}
\right]
\nonumber\\
&\quad\le
4\|p_{n,t}\|_{\infty,[-t,t]}^2
\Pr(r<|X|\le R_0).
\label{eq:Jimpl_tail_new}
\end{align}
Combining \eqref{eq:Jimpl_core_new} and \eqref{eq:Jimpl_tail_new} proves \eqref{eq:Jimpl_bound_sup}.
\end{IEEEproof}
Define
\begin{align}
U_{\mathrm{impl}}(t)
&:=
\left(
M_r r^3
\frac{L^2}{L^2-1}
\delta_t
\right)^2
\Pr(|X|\le r)
\nonumber\\
&\quad+
4\|p_{n,t}\|_{\infty,[-t,t]}^2
\Pr(r<|X|\le R_0).
\end{align}
Substituting this bound into \eqref{eq:J_dep_gap_bound}, we obtain the final upper bound
\begin{equation}\label{eq:Jdep_gap_final_upper}
J_{\mathrm{DEP}}(t)-J_{\mathrm{DEF}}(t)
\le
2\sqrt{
J_{\mathrm{DEF}}(t)U_{\mathrm{impl}}(t)
}
+
U_{\mathrm{impl}}(t).
\end{equation}

The bound separates the DEP implementation error into two contributions.
The first term in $U_{\mathrm{impl}}(t)$ measures the near-identity error on the preservation region $[-r,r]$ and is controlled by the DEP parameter $\delta_t$, the derivative scale $M_r$, and the preservation radius $r$.
The second term accounts for inputs outside the preservation region but still inside the covered input interval $[-R_0,R_0]$.
Consequently, the DEP-based objective remains close to the ideal DEF objective when the DEP accurately preserves inputs in $[-r,r]$ and the probability mass of $r<|X|\le R_0$ is small.

\section{Numerical Experiments}\label{sec:experiments}

\begin{figure}[!ht]
    \centering
    \subfloat[]{\includegraphics[width=0.45\textwidth]{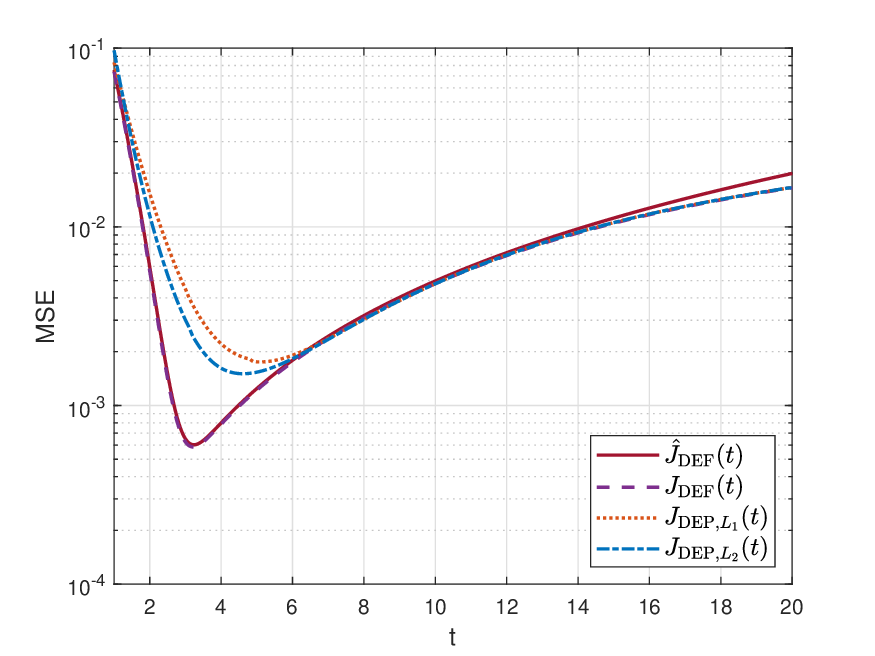}\label{fig:relu_optimal}}
    \hfill
    \subfloat[]{\includegraphics[width=0.45\textwidth]{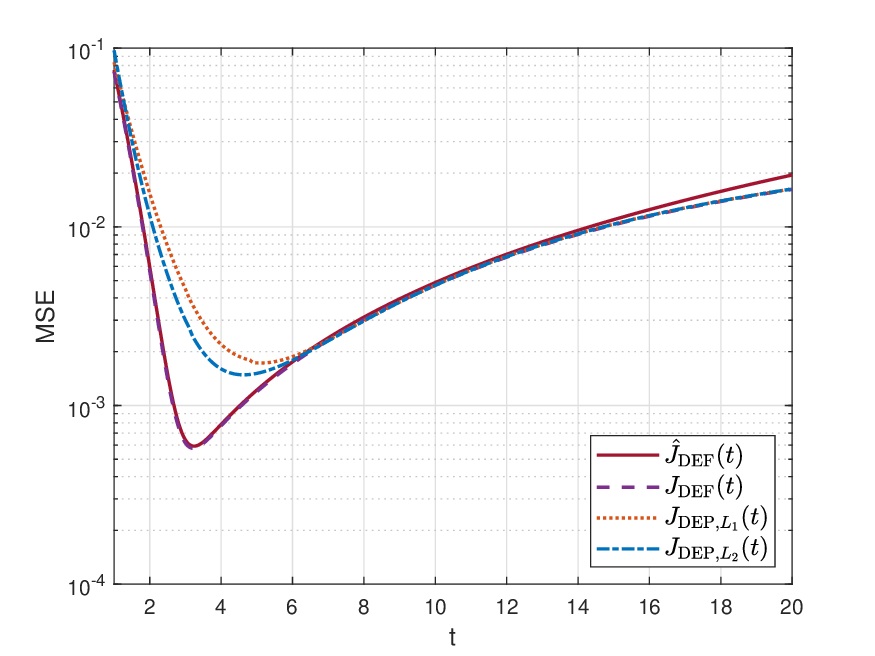}\label{fig:leakyReLU_optimal}}
    \hfill
    \subfloat[]{\includegraphics[width=0.45\textwidth]{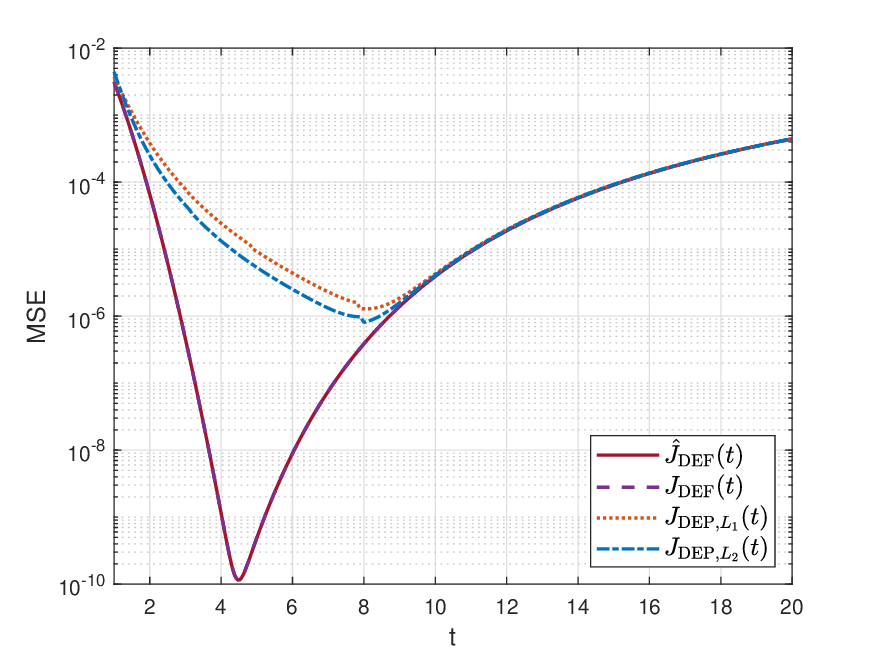}\label{fig:sigmoid_optimal}}
    \caption{Comparison of the analytical proxy $\widehat{J}_{\mathrm{DEF}}(t)$,
the numerical ideal objective $J_{\mathrm{DEF}}(t)$, and the practical
objectives $J_{\mathrm{DEP},L_1}(t)$ and $J_{\mathrm{DEP},L_2}(t)$ for representative nonlinearities:
(a) ReLU, (b) LeakyReLU, and (c) sigmoid.}
    \vspace{-4mm}
    \label{fig:optimal_range1}
\end{figure}

\begin{figure}[!ht]
    \centering
    \subfloat[]{\includegraphics[width=0.45\textwidth]{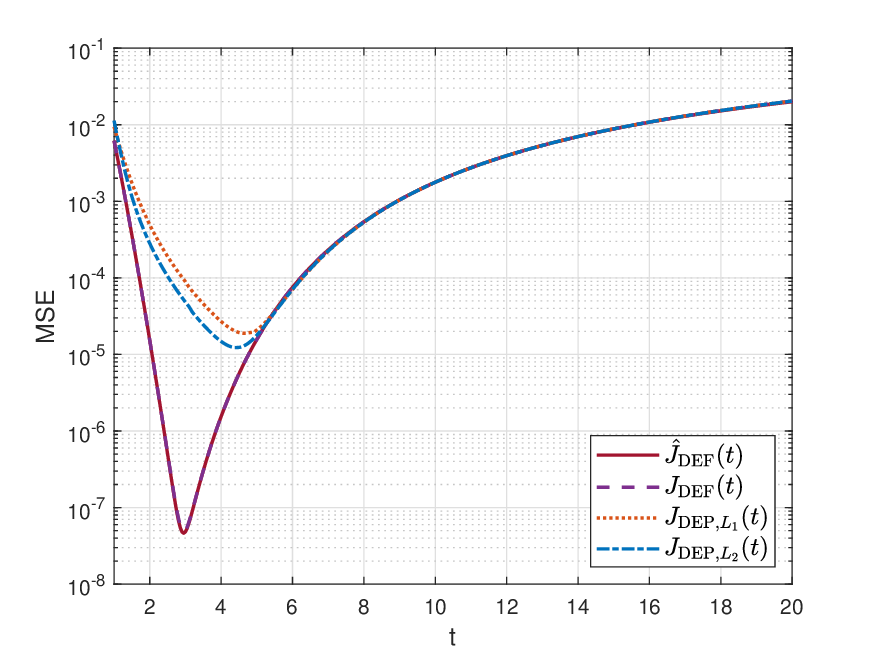}\label{fig:tanh_optimal}}
    \hfill
    \subfloat[]{\includegraphics[width=0.45\textwidth]{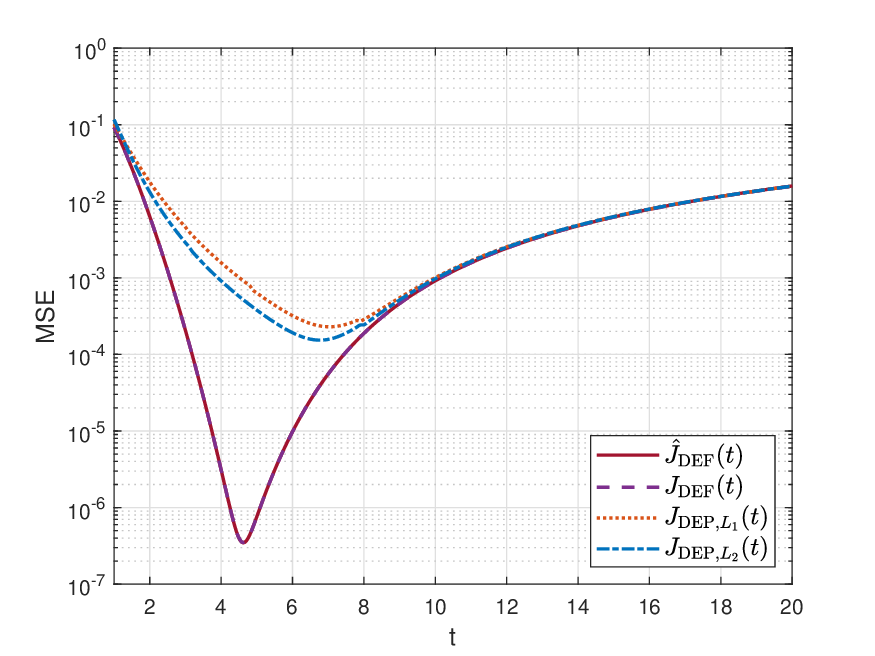}\label{fig:gelu_optimal}}
    \caption{Comparison of the analytical proxy $\widehat{J}_{\mathrm{DEF}}(t)$,
the numerical ideal objective $J_{\mathrm{DEF}}(t)$, and the practical
objectives $J_{\mathrm{DEP},L_1}(t)$ and $J_{\mathrm{DEP},L_2}(t)$ for representative nonlinearities:
(a) tanh, (b) GELU.}
    \vspace{-4mm}
    \label{fig:optimal_range2}
\end{figure}

We evaluate the proposed interval optimization framework under two controlled pre-activation models with unit variance: a Gaussian model $X\sim\mathcal{N}(0,\sigma^2)$ with $\sigma=1$, and a Laplace model $X\sim\mathrm{Laplace}(0,b)$ with $b=1/\sqrt{2}$. 
All numerical minimax approximations are computed using Chebfun~\cite{Driscoll2014chebfun}.
For each candidate radius $t$, we compute the degree-15 minimax polynomial $p_{15,t}$ on $[-t,t]$.
The numerical ideal DEF objective $J_{\mathrm{DEF}}(t)$ is evaluated by directly integrating the ideal-clipping MSE in \eqref{eq:J_def_def}.
The practical DEP objective is evaluated under the covered-input model $|X|\le R_0$. 

For the analytical proxy $\widehat{J}_{\mathrm{DEF}}(t)$, the minimax-error curve $E_n(t)$ is handled according to the activation type.
For ReLU and LeakyReLU with slope $\lambda=0.01$, we use the exact homogeneity-based scaling law $E_n(t)=tE_n(1)$.
For sigmoid, tanh, and GELU, we compute $E_n(t)$ on the sampled grid $\mathcal{T}$ and fit a degree-15 polynomial surrogate $\widetilde{E}_n(t)$ to the sampled values.

The HE-realizable interval control is implemented using the DEP composition introduced in Section~\ref{sec:dep_impl}. 
Because intermediate values cannot be monitored during encrypted execution, the DEP composition must be defined over a covered input region $[-R_0,R_0]$ before evaluation. 
Here, $R_0$ is a coverage parameter that controls the probability of encountering inputs outside the DEP-supported region, whereas the minimax design radius $t$ is optimized separately to minimize the distribution-induced approximation MSE.

We select $R_0$ using an aggregate tail-probability budget. 
Following the encrypted ResNet-20 evaluation setting of \cite{Lee2022low}, the inference involves approximately $N=1.88\times10^9$ scalar activation evaluations on the CIFAR-10 test set and requires
\begin{equation}
    N\cdot\Pr(|X|>R_0)\le \eta,
\end{equation}
with $\eta=10^{-3}$. 
Under the conservative Laplace tail model, this criterion gives $R_0/\sigma\approx19.99$; hence, we use $R_0=20\sigma$ throughout the experiments.

For each candidate design radius $t$, the number of DEP stages is chosen as
\begin{equation}
    m_L(t)
    =
    \max\left\{
    0,
    \left\lceil
    \log_L\left(\frac{R_0}{t}\right)
    \right\rceil
    \right\}.
\end{equation}
To realize HE-compatible interval control, we instantiate the cubic base DEP of~\cite{Cheon2022dep},
\begin{equation}
    B_{\mathrm{cubic}}(x)
    =
    x-\frac{4}{27}x^{3}.
\end{equation}
For a design radius $t$, the scaled cubic DEP is
\begin{equation}
    B_t(x)
    =
    t\,B_{\mathrm{cubic}}(x/t)
    =
    x-\frac{4}{27t^{2}}x^{3}.
\end{equation}

We evaluate two extension factors from the admissible range $1.5<L<1.5\sqrt{3}$, namely $L_1=1.6$ and $L_2=2.5$. A larger $L$ covers $[-R_0,R_0]$ with fewer DEP stages, but it also reduces the admissible preservation region. 
Accordingly, $L_1$ provides a larger preservation region at the cost of more DEP stages, whereas $L_2$ reduces the stage count at the cost of a smaller preservation region. 
This choice exposes the practical trade-off between preservation accuracy and accumulated DEP composition error.

\begin{table*}[!ht]
\centering
\caption{Comparison of the optimal radius $t^*$ and the corresponding objective values for fixed-interval Remez baselines, the analytical proxy $\widehat{J}_{\mathrm{DEF}}$, the numerical ideal objective $J_{\mathrm{DEF}}$, and the practical DEP objectives $J_{\mathrm{DEP},L_1}$ and $J_{\mathrm{DEP},L_2}$ for a Gaussian distribution.}
\label{tab:mse_comparison_Gaussian}
\small
\begin{tabular}{llccccc}
\toprule
\textbf{Function} & \textbf{Metric} & Fixed Remez ($R_0=20$) & \textbf{$\widehat{J}_{\mathrm{DEF}}$} & \textbf{$J_{\mathrm{DEF}}$} & \textbf{$J_{\mathrm{DEP},L_1}$} & \textbf{$J_{\mathrm{DEP},L_2}$} \\
\midrule
\multirow{2}{*}{ReLU} & Radius & $R_0$ & 3.23 & 3.20 & 5.10 & 4.60 \\
 & Value & $1.65 \times 10^{-2}$ & $6.02 \times 10^{-4}$ & $5.86 \times 10^{-4}$ & $1.75 \times 10^{-3}$ & $1.51 \times 10^{-3}$ \\
\midrule
\multirow{2}{*}{LeakyReLU} & Radius & $R_0$ & 3.24 & 3.20 & 5.10 & 4.60 \\
 & Value & $1.62 \times 10^{-2}$ & $5.92 \times 10^{-4}$ & $5.76 \times 10^{-4}$ & $1.73 \times 10^{-3}$ & $1.49 \times 10^{-3}$ \\
\midrule
\multirow{2}{*}{Sigmoid} & Radius & $R_0$ & 4.49 & 4.50 & 8.10 & 8.00 \\
 & Value & $4.40 \times 10^{-4}$ & $1.14 \times 10^{-10}$ & $1.15 \times 10^{-10}$ & $1.29 \times 10^{-6}$ & $8.10 \times 10^{-7}$ \\
\midrule
\multirow{2}{*}{Tanh} & Radius  & $R_0$ & 2.95 & 3.00 & 4.70 & 4.40 \\
 & Value & $2.04 \times 10^{-2}$ & $4.62 \times 10^{-8}$ & $5.12 \times 10^{-8}$ & $1.89 \times 10^{-5}$ & $1.23 \times 10^{-5}$ \\
\midrule
\multirow{2}{*}{GELU} & Radius  & $R_0$ & 4.62 & 4.60 & 7.00 & 6.80 \\
 & Value & $1.57 \times 10^{-2}$ & $3.46 \times 10^{-7}$ & $3.49 \times 10^{-7}$ & $2.30 \times 10^{-4}$ & $1.54 \times 10^{-4}$ \\
\bottomrule
\end{tabular}
\end{table*}

\begin{table*}
\centering
\caption{Comparison of the optimal radius $t^*$ and the corresponding objective values for fixed-interval Remez baselines, the analytical proxy $\widehat{J}_{\mathrm{DEF}}$, the numerical ideal objective $J_{\mathrm{DEF}}$, and the practical DEP objectives $J_{\mathrm{DEP},L_1}$ and $J_{\mathrm{DEP},L_2}$ for a Laplace distribution.}
\label{tab:mse_comparison_Laplace}
\small
\begin{tabular}{llccccc}
\toprule
\textbf{Function} & \textbf{Metric} & Fixed Remez ($R_0=20$) & \textbf{$\widehat{J}_{\mathrm{DEF}}$} & \textbf{$J_{\mathrm{DEF}}$} & \textbf{$J_{\mathrm{DEP},L_1}$} & \textbf{$J_{\mathrm{DEP},L_2}$} \\
\midrule
\multirow{2}{*}{ReLU} & Radius & $R_0$ & 5.11 & 5.10 & 6.50 & 6.10 \\
 & Value & $1.57 \times 10^{-2}$ & $1.66 \times 10^{-3}$ & $1.58 \times 10^{-3}$ & $3.01 \times 10^{-3}$ & $2.60 \times 10^{-3}$ \\
\midrule
\multirow{2}{*}{LeakyReLU} & Radius & $R_0$ & 5.13 & 5.10 & 6.60 & 6.10 \\
 & Value & $1.54 \times 10^{-2}$ & $1.64 \times 10^{-3}$ & $1.56 \times 10^{-3}$ & $2.97 \times 10^{-3}$ & $2.57 \times 10^{-3}$ \\
\midrule
\multirow{2}{*}{Sigmoid} & Radius & $R_0$ & 5.47 & 5.50 & 8.10 & 8.00 \\
 & Value & $3.55 \times 10^{-4}$ & $4.02 \times 10^{-9}$ & $4.04 \times 10^{-9}$ & $1.41 \times 10^{-6}$ & $8.92 \times 10^{-7}$ \\
\midrule
\multirow{2}{*}{Tanh} & Radius & $R_0$ & 3.16 & 3.20 & 4.50 & 4.30 \\
 & Value & $1.80 \times 10^{-2}$ & $1.39 \times 10^{-7}$ & $1.60 \times 10^{-7}$ & $1.57 \times 10^{-5}$ & $1.00 \times 10^{-5}$ \\
\midrule
\multirow{2}{*}{GELU} & Radius & $R_0$ & 6.69 & 6.70 & 8.00 & 8.00 \\
 & Value & $1.83 \times 10^{-2}$ & $7.36 \times 10^{-5}$ & $7.52 \times 10^{-5}$ & $6.34 \times 10^{-4}$ & $4.51 \times 10^{-4}$ \\
\bottomrule
\end{tabular}
\end{table*}

Figs.~\ref{fig:optimal_range1} and~\ref{fig:optimal_range2}, together with Table~\ref{tab:mse_comparison_Gaussian}, summarize the optimal design radii and the corresponding objective values under the Gaussian input model.
The corresponding results under the Laplace input model are reported in Table~\ref{tab:mse_comparison_Laplace}.
For each function, we compare the analytical proxy $\widehat{J}_{\mathrm{DEF}}$, the numerically evaluated ideal DEF objective $J_{\mathrm{DEF}}$, and the practical DEP objectives $J_{\mathrm{DEP},L_1}$ and $J_{\mathrm{DEP},L_2}$.

As a fixed-interval baseline, we use a degree-15 minimax polynomial designed on the conservative interval $[-R_0,R_0]=[-20\sigma,20\sigma]$.
This baseline uses the same degree as the final minimax approximation polynomial $p_{15,t}$ used in the proposed method, but it does not include the additional DEP stages.
Following the cubic DEP construction of~\cite{Cheon2022dep}, each DEP stage requires two non-scalar multiplications.
Therefore, the DEP composition adds approximately $2m_{L}(t)$ non-scalar multiplications and $2m_{L}(t)$ multiplicative levels, where $m_{L}(t)=\lceil\log_L(R_0/t)\rceil$.
Thus, the fixed-interval Remez baseline should be interpreted as a conservative approximation-error reference with the same final minimax degree, rather than as a strict HE-cost-matched baseline.
The practical objectives $J_{\mathrm{DEP},L_1}$ and $J_{\mathrm{DEP},L_2}$ capture the MSE after including the DEP stages.

The primary role of $\widehat{J}_{\mathrm{DEF}}$ is to provide an analytically tractable surrogate for the ideal interval-selection objective $J_{\mathrm{DEF}}$.
Tables~\ref{tab:mse_comparison_Gaussian} and~\ref{tab:mse_comparison_Laplace} show that this surrogate accurately preserves the minimizer structure of the numerical ideal objective.
Across all tested nonlinearities, $\widehat{J}_{\mathrm{DEF}}$ yields minimizing radii that closely match those obtained from the numerically evaluated ideal objective $J_{\mathrm{DEF}}$.
Thus, although $\widehat{J}_{\mathrm{DEF}}$ is not constructed to reproduce the absolute MSE pointwise, it reliably preserves the interval-selection behavior of the ideal DEF objective.

The gap between $J_{\mathrm{DEF}}$ and the practical DEP objectives $J_{\mathrm{DEP},L_1}$ and $J_{\mathrm{DEP},L_2}$ reflects the implementation error introduced by replacing the ideal hard-clipping DEF with a polynomial DEP composition.
In the Gaussian and Laplace settings considered here, $L=2.5$ achieves a smaller practical MSE than $L=1.6$.
Although a smaller extension factor is closer to the ideal clipping behavior at the level of each individual DEP stage, it also requires more stages to cover the same input interval $[-R_0,R_0]$.
These results suggest that the accumulated composition error dominates the per-stage approximation benefit in the tested settings.

Compared with the conservative fixed-interval Remez baseline on $[-20\sigma,20\sigma]$, the optimized DEP-based evaluators achieve substantially lower MSE.
The improvement is especially pronounced for sigmoid, tanh, and GELU, where the achieved MSE is lower by several orders of magnitude.
Table~\ref{tab:mse_comparison_Laplace} shows analogous behavior under the Laplace model.
Compared with the Gaussian case, the heavier tails shift the optimal design radii to larger values, while the analytical proxy still tracks the ideal DEF minimizer closely.

\section{Conclusion} \label{sec:conclusion}

We studied interval optimization for degree-constrained minimax polynomial approximation under homomorphic encryption. 
By treating the approximation interval as an optimization variable, we formulated a distribution-aware objective that explicitly balances the trade-off between within-interval minimax error and outside-interval clipping error. 
We then derived an analytically tractable DEF-based proxy objective and connected it to HE-realizable DEP constructions through an implementation-error decomposition.
The numerical results on representative non-polynomial activation functions show that the minimizer of the analytical proxy closely matches that of the ideal DEF objective across the tested functions.
Moreover, the practical DEP-based evaluators achieve substantially lower MSE than the conservative fixed-interval Remez baseline, with particularly large gains for sigmoid, tanh, and GELU.

Integration of the proposed framework into end-to-end HE-based neural network inference, including the interplay between layer-wise pre-activation distributions, cumulative approximation errors across multiple non-polynomial primitives, and actual HE evaluation costs under specific cryptographic library implementations, is an important direction for future investigation that builds upon the analytical foundation established here.

\bibliographystyle{IEEEtran}
\bibliography{abrv,mybib}
\end{document}